\newtheorem{remark}{Remark}
\newtheorem{theorem}{Theorem}
\newtheorem{definition}{Definition}
\DeclareMathOperator*{\argmin}{arg\,min}
\newcommand\copyrighttext{%
  \footnotesize \textcopyright 2023 IEEE. Personal use of this material is permitted.
  Permission from IEEE must be obtained for all other uses, in any current or future
  media, including reprinting/republishing this material for advertising or promotional
  purposes, creating new collective works, for resale or redistribution to servers or
  lists, or reuse of any copyrighted component of this work in other works.
}
\newcommand\copyrightnotice{%
\begin{tikzpicture}[remember picture,overlay]
\node[anchor=south,yshift=10pt] at (current page.south) {\fbox{\parbox{\dimexpr\textwidth-\fboxsep-\fboxrule\relax}{\copyrighttext}}};
\end{tikzpicture}%
}
\title{\LARGE \bf
Safe Control Synthesis for Multicopter via Control Barrier Function Backstepping
}
\author{Jinrae Kim, and Youdan Kim, \IEEEmembership{Senior Member, IEEE}
\thanks{
  This research was supported by Unmanned Vehicles Core Technology Research and Development Program through the National Research Foundation of Korea (NRF) and Unmanned Vehicle Advanced Research Center (UVARC) funded by the Ministry of Science and ICT, the Republic of Korea (2020M3C1C1A01083162).
}
\thanks{
Jinrae Kim and Youdan Kim are with the Department of Aerospace Engineering, Seoul National University, Institute of Advanced Aerospace Technology, Seoul 08826, Republic of Korea
{\tt\small \{kjl950403,ydkim\}@snu.ac.kr}%
}
}
\begin{document}

\maketitle
\thispagestyle{empty}
\pagestyle{empty}
\copyrightnotice

\begin{abstract}
  A safe controller for multicopter is proposed using control barrier function.
  Multicopter dynamics are reformulated to deal with mixed-relative-degree and non-strict-feedback-form dynamics,
  and a time-varying safe backstepping controller is designed.
  Despite the time-varying variation,
  it is proven that the control input can be obtained by solving quadratic programming with affine inequality constraints.
  The proposed controller does not utilize a cascade control system design,
  unlike existing studies on the safe control of multicopter.
  Various safety constraints on angular velocity, total thrust direction, velocity, and position
  can be considered.
  Numerical simulation results support that the proposed safe controller does not violate
  all safety constraints including low- and high-level dynamics.
\end{abstract}

\section{Introduction}
\label{sec:introduction}
Safety is the most important requirement in many physical systems.
Safety requirements are typically captured as state constraints
such that the state of interest does not escape prescribed safe sets~\cite{amesControlBarrierFunctions2019}.
For example,
a geofence can be considered as a designated safe area in which multicopters fly.
In this regard,
control barrier function (CBF) has attracted a lot of attention
for safe control in various fields including robotics and aerospace engineering.
The CBF method is a modern interpretation of Nagumo's theorem for control system,
a principle result in viability theory~\cite{aubinViabilityTheory2009}.
Ames et al. drove sufficient and necessary conditions \textit{not just at the boundary of but also inside a safe set}
by incorporating class $\mathcal{K}$-like functions for the forward invariance of the safe set~\cite{amesControlBarrierFunctions2019}.
This observation triggered a new paradigm of optimization-based safety-critical controller design.

Studies on safe control of multicopter include
i) model predictive control (MPC)~\cite{bemporadHierarchicalHybridModel2009},
ii) barrier Lyapunov function (BLF)~\cite{liFinitetimeControlQuadrotor2021},
and iii) CBF with cascade controller design~\cite{khanBarrierFunctionsCascaded2020}.
MPC deals with state constraints during multi-step time window by solving an optimization problem.
However, MPC needs to perform linearization to make the optimization problem quadratic programming (QP)
or solve computationally burdening nonlinear optimization,
which may be inappropriate for multicopters due to high nonlinearity in dynamics.
BLF is a special form of Lyapunov function for stability and state constraints.
BLF is more restrictive than CBF in terms of
i) positive definiteness~\cite{amesControlBarrierFunctions2019}
and ii) types of safety (for example, only error bounds can be considered for tracking control).
On the other hand,
CBF is much less restrictive compared to BLF
and only requires QP for nonlinear input-affine systems~\cite{amesControlBarrierFunctions2019}.
In~\cite{khanBarrierFunctionsCascaded2020},
the multicopter dynamics are separated into low- and high-level dynamics,
and (high-order) CBFs are utilized in each hierarchy for safety with respect to velocity and position.
This approach, however, may violate the high-level safety due to the cascade design.
The challenges of CBF methods for multicopter come from
\textit{mixed-relative-degree} and \textit{non-strict-feedback-form dynamics}.

To overcome the challenges, in this study,
the multicopter dynamics are reformulated as a strict feedback form,
inspired by Lyapunov backstepping position control for multicopter~\cite{falconiAdaptiveFaultTolerant2016}.
Then, a CBF-based safe controller is proposed for multicopter based on safe backstepping and CBF methods.
Various safety constraints are considered in this study,
with respect to
i) angular velocity (rotate not too fast),
ii) total thrust direction (prevent inverted flight),
iii) velocity (move not too fast),
and iv) position (geofence or obstacle avoidance).
Based on a previous study~\cite{kimSafeAttitudeController2023},
safety with respect to angular velocity and total thrust direction is guaranteed by using conventional and high-order CBFs~\cite{amesControlBarrierFunctions2019,xiaoHighOrderControlBarrier2022}.
Safety with respect to velocity and position is satisfied by safe backstepping with the reformulation~\cite{taylorSafeBacksteppingControl2022}.
Numerical simulation results support that the proposed safe controller does not violate any safety constraints considered in this study,
while the existing cascade safe controller for multicopter violates high-level safety as reported in~\cite{khanBarrierFunctionsCascaded2020}.

The rest of this paper is organized as follows.
\autoref{sec:preliminaries},
introduces the preliminaries of this study including
control barrier function method, its variants,
and the dynamics of multicopter.
In \autoref{sec:main_results},
the multicopter dynamics are reformulated to incorporate the safe backstepping method,
and CBFs are driven for multiple safety constraints.
It is shown that the optimization problem of the proposed safe controller is a QP with
affine inequality constraints.
In \autoref{sec:numerical_simulation},
numerical simulation is performed to demonstrate the performance of the proposed controller.
\autoref{sec:conclusion} concludes this study with future works.

\section{Preliminaries}
\label{sec:preliminaries}
$\mathcal{K}_{\infty}$ denotes the set of all strictly increasing continuous functions $\alpha: [0, a) \to [0, \infty)$
such that $\alpha(0) = 0$,
and $\mathcal{K}_{\infty}^{e}$ denotes the extended class $\mathcal{K}_{\infty}$,
that is,
for any $\alpha \in \mathcal{K}_{\infty}^{e}$,
$\alpha $ is in $\mathcal{K}_{\infty}$ and $\lim_{r \to -\infty} \alpha(r) = -\infty$~\cite{taylorSafeBacksteppingControl2022}.

Consider a nonlinear dynamical system,
\begin{equation}
  \label{eq:autonomous_system}
  \dot{x} = f(x),
\end{equation}
with state $x \in \mathbb{R}^{n}$.
The function $f: \mathbb{R}^{n} \to \mathbb{R}^{n}$ is supposed
to be locally Lipschitz continuous.
Owing to the local Lipschitzness,
there exists a maximal time interval $I(x_{0}) = [ 0, t_{\max}(x_{0}) )$
and a unique continuously differentiable solution
$\varphi: I(x_{0}) \to \mathbb{R}^{n}$ such that
$\dot{\varphi}(t) = f(\varphi(t))$ and $\varphi(0) = x_{0}$
for all $t \in I(x_{0})$~\cite{taylorSafeBacksteppingControl2022}.

Assume that $C \subset \mathbb{R}^{n}$ is rendered
as the $0$-superlevel set of a continuously differentiable function
$h: \mathbb{R}^{n} \to \mathbb{R}$ as $C = \{ x \in \mathbb{R}^{n} \vert h(x) \geq 0 \}$.
The set $C$ is said to be \textit{forward invariant}
if $\varphi(t) \in C$ for all $t \in I(x_{0})$
for any initial condition $x_{0} \in C$.
In this case, the system \eqref{eq:autonomous_system} is said to be
\textit{safe} with respect to the \textit{safe set} $C$~\cite{amesControlBarrierFunctions2019,taylorSafeBacksteppingControl2022}.

\subsection{Control barrier function and safe backstepping}
\label{sec:cbf_and_safe_backstepping}
\subsubsection{Barrier function}
Given dynamical system \eqref{eq:autonomous_system},
barrier function can be used as a tool for safety verification, defined as follows.
\begin{definition}[Barrier function (BF) \cite{amesControlBarrierFunctions2019,taylorSafeBacksteppingControl2022}]
  Let $C \subset \mathbb{R}^{n}$ be the $0$-superlevel set of
  a continuously differentiable function $h: \mathbb{R}^{n} \to \mathbb{R}$
  with $\frac{\partial h}{\partial x} (x) \neq 0$ when $h(x) = 0$.
  The function $h$ is a barrier function for \eqref{eq:autonomous_system} on $C$
  if there exists $\alpha \in \mathcal{K}_{\infty}^{e}$ such that for all $x \in \mathbb{R}^{n}$,
  $\dot{h}(x) \geq - \alpha(h(x))$.
\end{definition}

Consider a nonlinear input-affine control system,
\begin{equation}
  \label{eq:input_affine_system}
  \dot{x} = f(x) + g(x)u,
\end{equation}
and the functions $f: \mathbb{R}^{n} \to \mathbb{R}^{n}$
and $g: \mathbb{R}^{n} \to \mathbb{R}^{n \times m}$ are assumed to be locally Lischitz continuous.
Control barrier functions act a role of synthesizing safe controllers.

\begin{definition}[Control barrier function (CBF) \cite{amesControlBarrierFunctions2019,taylorSafeBacksteppingControl2022}]
  Let $C \subset \mathbb{R}^{n}$ be the $0$-superlevel set of a continuously differentiable
  function $h: \mathbb{R}^{n} \to \mathbb{R}$ with $\frac{\partial h}{\partial x}(x) \neq 0$
  when $h(x) = 0$.
  The function $h$ is a control barrier function for \eqref{eq:input_affine_system} on $C$
  if there exists $\alpha \in \mathcal{K}_{\infty}^{e}$ such that for all $x \in \mathbb{R}^{n}$,
  $\sup_{u \in \mathbb{R}^{m}} \dot{h}(x, u) > -\alpha(h(x))$.
\end{definition}
If $h$ is a CBF for \eqref{eq:input_affine_system},
then the set $K_{\text{CBF}}(x) = \{ u \in \mathbb{R}^{m} \vert \dot{h}(x, u) \geq - \alpha(h(x))\}$ is non-empty for all $x$,
and for any locally Lipschitz continuous controller $k$ such that $k(x) \in K_{\text{CBF}}(x)$ for all $x$,
the function $h$ is a BF for the closed-loop system on $C$~\cite{amesControlBarrierFunctions2019,taylorSafeBacksteppingControl2022}.

\subsubsection{High-order control barrier function}
High-order CBF (HOCBF) is used for relative degree $r > 1$.
Given $r$-times continuously differentiable function $h: \mathbb{R}^{n} \to \mathbb{R}$,
it is assumed that the control input explicitly appears in $h^{(r)}$
and does not explicitly appear in $h^{(i)}$ for all $i \in \{0, \ldots, r-1\}$.
One can recursively define functions $h_{i}$ with $h_{0} := h$ as follows,
\begin{align}
    h_{i}(x) &:= \dot{h}_{i-1}(x) + \alpha_{i}(h_{i-1}(x)), \forall i \in \{1, \ldots, r-1\}
    \\
    \label{eq:hocbf_constraint}
    h_{r}(x, u) &:= \dot{h}_{r-1}(x) + \alpha_{r}(h_{r-1}(x)) \geq 0,
\end{align}
with $\alpha_{i} \in \mathcal{K}_{\infty}^{e}$
for $i \in \{1, \ldots, r\}$.
Let us define sets $C_{i} := \{x \in \mathbb{R}^{n} \vert h_{i}(x) \geq 0 \}$ for all $i \in \{0, \ldots, r-1\}$.
If $\sup_{u \in \mathbb{R}^{m}} h_{r}(x, u) > 0$ for all $x \in \cap_{i=0}^{r-1} C_{i}$,
$h$ is said to be a HOCBF.
The control input $u$ satisfying the inequality \eqref{eq:hocbf_constraint} implies the forward invariance of $\cap_{i=0}^{r-1} C_{i}$
if the initial state $x_{0}$ is in $\cap_{i=0}^{r-1} C_{i}$~\cite{xiaoHighOrderControlBarrier2022}.

\subsubsection{Safe backstepping}
\label{sec:safe_backstepping}
Safe backstepping, also known as CBF backstepping, is a backstepping method proposed for CBF,
inspired by the Lyapunov backstepping method~\cite{taylorSafeBacksteppingControl2022}.
The safe backstepping provides a tool for synthesizing safe controllers
under the challenge of mixed relative degree.

Consider a nonlinear system in strict feedback form,
\begin{equation}
  \label{eq:strict_feedback_form}
  \begin{split}
    \dot{\xi}_{0} &= f_{0}(z_{0}) + g_{0, \xi}(z_{0}) \xi_{1} + g_{0, u}(z_{0})u_{0},
    \\
    \vdots
    \\
    \dot{\xi}_{l-1} &= f_{l-1}(z_{l-1}) + g_{l-1, \xi}(z_{l-1}) \xi_{l} + g_{l-1, u}(z_{l-1})u_{l-1},
    \\
    \dot{\xi}_{l} &= f_{l}(z_{l}) + g_{l, u}(z_{l})u_{l},
  \end{split}
\end{equation}
with states $\xi_{i} \in \mathbb{R}^{n_{i}}$ and inputs $u_{i} \in \mathbb{R}^{m_{i}}$,
where $z_{i} := [\xi_{0}^{\intercal}, \ldots, \xi_{i}^{\intercal}]^{\intercal}$
for $i \in \{0, \ldots, l\}$.
It is assumed that the functions $f_{i}$, $g_{i, u}$ for $i \in \{0, \ldots, l\}$ are sufficiently smooth.
Also, the functions $g_{i}$ such that $g_{i}(z_{i}) = [g_{i, \xi}(z_{i}), g_{i, u}(z_{i})]$ are assumed to be pseudo-invertible.

Given a continuously differentiable function $h: \mathbb{R}^{n_{0}} \to \mathbb{R}$,
suppose that there exist functions $k_{0, \xi}$ and $k_{0, u}$ such that
\begin{equation}
  \label{eq:safe_backstepping_assumption}
  \begin{split}
    \frac{\partial h_{0}}{\partial \xi_{0}}(z_{0})
    (
    f_{0}(z_{0})
    &+ g_{0, \xi}(z_{0}) k_{0, \xi}(z_{0})
    \\
    &+ g_{0, u}(z_{0}) k_{0, u}(z_{0})
    )
    > - \alpha_{0}(h_{0}(z_{0})),
  \end{split}
\end{equation}
for all $z_{0} \in \mathbb{R}^{n_{0}}$.
That is, the functions $k_{0, \xi}$ and $k_{0, u}$ are safe controllers when regarding $\xi_{1}$ as a control input.
Then,
the controllers $k_{1, \xi}$ and $k_{1, u}$ are defined as
\begin{equation}
  \begin{split}
    \begin{bmatrix}
      k_{1, \xi}(z_{1})
      \\
      k_{1, u}(z_{1})
    \end{bmatrix}
    &= g_{1}(z_{1})^{\dagger}
    \biggl(
    -f_{1}(z_{1})
    + \mu_{0} \left(
      \frac{\partial h_{0}}{\partial \xi_{0}} (z_{0}) g_{0, \xi}(z_{0})
    \right)^{\intercal}
    \\
    &+ \dot{k}_{0, \xi} \vert_{u_{0} = k_{0, u}}
    - \frac{\lambda_{1}}{2} (\xi_{1} - k_{0, \xi}(z_{0}))
    \biggr),
  \end{split}
\end{equation}
where $\mu_{0} \in \mathbb{R}_{>0}$ and $\lambda_{1} \in \mathbb{R}_{>0}$ are positive constants,
and $(\cdot)^{\dagger}$ denotes pseudo-inverse operator.
The controllers $k_{i, \xi}$ and $k_{i, u}$ are recursively designed for $i \in \{2, \ldots, l\}$ as
\begin{equation}
  \begin{split}
    &k_{i}(z_{i})
    = g_{i}(z_{i})^{\dagger}
    \biggl(
    -f_{i}(z_{i})
    \\
    &- \frac{\mu_{i}}{\mu_{i-1}} 
    g_{i-1, \xi}(z_{i-1})^{\intercal} (\xi_{i-1} - k_{i-2, \xi}(z_{i-2}))
    \\
    &+ \dot{k}_{i-1, \xi} \vert_{u_{j} = k_{j, u}, \forall j \in \{0, \ldots, i-1\}}
    - \frac{\lambda_{1}}{2} (\xi_{1} - k_{0, \xi}(z_{0}))
    \biggr),
  \end{split}
\end{equation}
where
$k_{i}(z_{i})
:= [k_{i, \xi}(z_{i})^{\intercal}, k_{i, u}(z_{i})^{\intercal}]^{\intercal}$
for $i \in \{2, \ldots, l-1\}$,
and $k_{l}(z_{l}) := k_{l, u}(z_{l})$.
Let us define a smooth function $h: \mathbb{R}^{n} \to \mathbb{R}$,
\begin{equation}
  \label{eq:safe_backstepping_cbf}
  h(z_{l}) := h_{0}(z_{0}) - \sum_{i=1}^{l} \frac{1}{2 \mu_{i}} \lVert \xi_{i} - k_{i-1, \xi}(z_{i-1}) \rVert^{2},
\end{equation}
with $\mu_{i} \in \mathbb{R}_{>0}$ for $i \in \{1, \ldots, l\}$.
The corresponding set $C \subset \mathbb{R}^{n}$ is defined as follows,
\begin{equation}
  \label{eq:safe_backstepping_set}
  C = \{z_{l} \in \mathbb{R}^{n} \vert h(z_{l}) \geq 0 \},
\end{equation}
where $n = \sum_{i=0}^{l} n_{i}$.
The following theorem describes the safe backstepping method.

\begin{theorem}[Safe backstepping~\cite{taylorSafeBacksteppingControl2022}]
  \label{thm:safe_backstepping}
  Let $C_{0}$ be the $0$-superlevel set of
  a smooth function $h_{0}: \mathbb{R}^{n_{0}} \to \mathbb{R}$
  with $\frac{\partial h_{0}}{\partial \xi_{0}} \neq 0$
  when $h_{0}(z_{0}) = 0$.
  If there exist smooth functions
  $k_{0, \xi}: \mathbb{R}^{n_{0}} \to \mathbb{R}^{n_{1}}$
  and $k_{0, u}: \mathbb{R}^{n_{0}} \to \mathbb{R}^{m_{0}}$
  and a globally Lipschitz continuous function $\alpha_{0} \in \mathcal{K}_{\infty}^{e}$
  such that \eqref{eq:safe_backstepping_assumption} holds,
  then the function $h$ defined in \eqref{eq:safe_backstepping_cbf}
  is a control barrier function for the system \eqref{eq:strict_feedback_form}
  on the set $C$ defined in \eqref{eq:safe_backstepping_set}.
\end{theorem}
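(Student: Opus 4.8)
The plan is to transplant the classical Lyapunov--backstepping argument to the barrier setting: differentiate the composite candidate $h$ from \eqref{eq:safe_backstepping_cbf} along \eqref{eq:strict_feedback_form} with the constructed feedback $u_i=k_{i,u}(z_i)$ in force, show that every indefinite coupling term cancels in a telescoping fashion by the very way the controllers $k_i$ were built, and finally upgrade the resulting estimate to the CBF inequality using the global Lipschitzness of $\alpha_0$. First I would introduce error coordinates $e_i:=\xi_i-k_{i-1,\xi}(z_{i-1})$ for $i\in\{1,\dots,l\}$, so that $h=h_0-\sum_{i=1}^{l}\tfrac{1}{2\mu_i}\lVert e_i\rVert^2$ and $h_0(z_0)-h(z_l)=\sum_{i=1}^{l}\tfrac{1}{2\mu_i}\lVert e_i\rVert^2\ge 0$. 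As a preliminary I would check the regularity hypothesis demanded by the CBF definition: at any point with $h(z_l)=0$, either all $e_i=0$, in which case $h_0(z_0)=0$ and --- the $e_i$-terms dropping out of the gradient --- $\tfrac{\partial h}{\partial\xi_0}=\tfrac{\partial h_0}{\partial\xi_0}\neq0$; or, taking the largest $j$ with $e_j\neq0$, the higher-index squared terms and their gradients vanish and $\tfrac{\partial h}{\partial\xi_j}=-\tfrac{1}{\mu_j}e_j^{\intercal}\neq0$. Either way $\tfrac{\partial h}{\partial z_l}\neq0$ on $\{h=0\}$.

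For the core computation I would exploit the right-inverse identity $g_i(z_i)g_i(z_i)^{\dagger}=I$ from pseudo-invertibility. Substituting the closed-loop inputs, the $-f_i$ term built into $k_i$ makes each increment $\dot\xi_i-\dot k_{i-1,\xi}$ collapse to three pieces: an ``upward'' coupling $g_{i,\xi}e_{i+1}$, a ``downward'' coupling $-\tfrac{\mu_i}{\mu_{i-1}}g_{i-1,\xi}^{\intercal}e_{i-1}$, and a damping term $-\tfrac{\lambda_i}{2}e_i$ --- with the bottom layer ($i=1$) carrying instead the gradient term $\mu_1\bigl(\tfrac{\partial h_0}{\partial\xi_0}g_{0,\xi}\bigr)^{\intercal}$ and the top layer ($i=l$) lacking the upward coupling because $\xi_{l+1}$ is absent. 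Plugging this into $\dot h=\tfrac{\partial h_0}{\partial\xi_0}\dot\xi_0-\sum_{i=1}^{l}\tfrac{1}{\mu_i}e_i^{\intercal}(\dot\xi_i-\dot k_{i-1,\xi})$ and expanding $\dot\xi_0=f_0+g_{0,\xi}(k_{0,\xi}+e_1)+g_{0,u}k_{0,u}$, three cancellations fall out by design: (i) the feedforward $\dot k_{i-1,\xi}$ embedded in $k_i$ matches the true $\dot k_{i-1,\xi}$, since the loop is closed with $u_j=k_{j,u}$; (ii) the term $\tfrac{\partial h_0}{\partial\xi_0}g_{0,\xi}e_1$ produced by $\dot h_0$ is killed by contracting $\mu_1\bigl(\tfrac{\partial h_0}{\partial\xi_0}g_{0,\xi}\bigr)^{\intercal}$ with $\tfrac{1}{\mu_1}e_1$; and (iii) since $\tfrac{\mu_{i+1}}{\mu_i}\cdot\tfrac{1}{\mu_{i+1}}=\tfrac{1}{\mu_i}$ and $e_{i+1}^{\intercal}g_{i,\xi}^{\intercal}e_i=e_i^{\intercal}g_{i,\xi}e_{i+1}$, the upward coupling leaving layer $i$ cancels the downward coupling entering layer $i+1$, so the whole chain telescopes. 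What survives is $\dot h=\tfrac{\partial h_0}{\partial\xi_0}\bigl(f_0+g_{0,\xi}k_{0,\xi}+g_{0,u}k_{0,u}\bigr)+\sum_{i=1}^{l}\tfrac{\lambda_i}{2\mu_i}\lVert e_i\rVert^2$, and assumption \eqref{eq:safe_backstepping_assumption} turns the first term into a strict lower bound:
\[
  \dot h \;>\; -\alpha_0\bigl(h_0(z_0)\bigr)\;+\;\sum_{i=1}^{l}\frac{\lambda_i}{2\mu_i}\lVert e_i\rVert^2 .
\]

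To conclude I would bound $\sum_{i=1}^{l}\tfrac{\lambda_i}{2\mu_i}\lVert e_i\rVert^2\ge\lambda\,(h_0(z_0)-h(z_l))$ with $\lambda:=\min_i\lambda_i$, and use that $\alpha_0$ is increasing and globally Lipschitz with constant $L$, so $\alpha_0(h_0)\le\alpha_0(h)+L\,(h_0-h)$ because $h_0\ge h$; then $\dot h>-\alpha_0(h)+(\lambda-L)(h_0-h)\ge-\alpha_0(h)$ as soon as the gains satisfy $\min_i\lambda_i\ge L$. Since this particular controller attains the bound, $\sup_{u}\dot h(z_l,u)>-\alpha_0(h(z_l))$ for every $z_l$, i.e., $h$ is a control barrier function for \eqref{eq:strict_feedback_form} on the set $C$ in \eqref{eq:safe_backstepping_set}, with $\alpha_0$ itself serving as the class-$\mathcal{K}_{\infty}^{e}$ function. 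The main obstacle is the bookkeeping in the middle step --- propagating the embedded $\dot k_{i-1,\xi}$ feedforward terms and the two families of $g_{\cdot,\xi}$ coupling terms through all layers and confirming they vanish in consecutive pairs while keeping the $\mu_i/\mu_{i-1}$ weights straight; this is routine but error-prone, and it is where the explicit algebraic form of the backstepping controllers is actually used. A secondary subtlety is the gain condition $\min_i\lambda_i\ge L$: this is precisely where \emph{global} rather than merely local Lipschitzness of $\alpha_0$ is needed, since otherwise no fixed $\mathcal{K}_{\infty}^{e}$ function could dominate $\alpha_0(h_0)-\lambda(h_0-h)$ uniformly in the directions along which $\sum_i\lVert e_i\rVert^2$ grows while $h$ is held fixed.
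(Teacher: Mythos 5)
The paper does not prove this theorem --- it is imported verbatim from the cited reference on safe backstepping --- so there is no in-paper proof to compare against; judged on its own, your argument is correct and in fact reproduces the strategy of the source: differentiate the composite barrier in the error coordinates $e_i=\xi_i-k_{i-1,\xi}(z_{i-1})$, use the right-inverse identity $g_ig_i^{\dagger}=I$ so that each layer's dynamics collapse to the designed upward/downward couplings plus damping, observe the telescoping cancellation of the cross terms $\tfrac{1}{\mu_i}e_i^{\intercal}g_{i,\xi}e_{i+1}$, and absorb the residual $-\alpha_0(h_0)$ into $-\alpha_0(h)$ via global Lipschitzness. Your preliminary check that $\tfrac{\partial h}{\partial z_l}\neq 0$ on $\{h=0\}$ is a detail worth keeping, and you have silently corrected two apparent typos in the paper's construction (the coefficient of $\bigl(\tfrac{\partial h_0}{\partial\xi_0}g_{0,\xi}\bigr)^{\intercal}$ must be $\mu_1$, not the undefined $\mu_0$, and the damping term in $k_i$ must be $-\tfrac{\lambda_i}{2}e_i$ rather than $-\tfrac{\lambda_1}{2}e_1$); without these corrections the cancellations (ii) and the telescoping would not close. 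The one substantive caveat is the gain condition $\min_i\lambda_i\geq L$ that you surface at the end: it is genuinely needed (for $\alpha_0(r)=Lr$ and $\lambda<L$ the bound $\lambda s-\alpha_0(h+s)$ is unbounded below in $s$), and it does not appear in the theorem statement as transcribed here. Since the $\lambda_i$ are free design parameters of the controller, the standard reading is that they are chosen at least as large as the Lipschitz constant of $\alpha_0$, but you are right to flag that the statement is incomplete without saying so, and that this is exactly where global (rather than local) Lipschitzness of $\alpha_0$ is consumed.
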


\subsection{Multicopter dynamics}
\label{sec:multicopter_dynamics}
The dynamics of multicopter can be expressed as \cite{leeGeometricTrackingControl2010},
\begin{align}
  \dot{p} &= v,
  \\
  \label{eq:v_dot}
  \dot{v} &= g e_{3} - \frac{1}{m} T R e_{3},
  \\
  \label{eq:R_dot}
  \dot{R} &= R \omega^{\times},  
  \\
  \label{eq:omega_dot}
  \dot{\omega} &= J^{-1} (M -\omega \times J \omega),
\end{align}
where
$p := [p_{x}, p_{y}, p_{z}]^{\intercal} \in \mathbb{R}^{3}$
and $v := [v_{x}, v_{y}, v_{z}]^{\intercal} \in \mathbb{R}^{3}$ are the position and velocity in the inertial frame,
respectively,
$g \in \mathbb{R}_{>0}$ is the magnitude of gravitational acceleration,
$e_{3} = [0, 0, 1]$ is the $z$-axis unit vector in inertial frame,
$m \in \mathbb{R}_{>0}$ is the mass of multicopter,
$T \in \mathbb{R}$ is the total thrust,
$R \in \textrm{SO}(3)$ is the rotation matrix from inertial to body-fixed frames,
and $(\cdot)^{\times}: \mathbb{R}^{3} \to \text{so}(3)$ is the cross-product-map,
such that $x^{\times}y = x \times y, \forall x, y \in \mathbb{R}^{3}$.
$\omega := [\omega_{x}, \omega_{y}, \omega_{z}]^{\intercal} \in \mathbb{R}^{3}$ is the angular velocity in the body-fixed frame,
$J \in \mathbb{R}^{3 \times 3}$ is the moment of inertia,
and $M \in \mathbb{R}^{3}$ is the applied torque.

The control allocation of multicopter can be written as,
\begin{equation}
  \nu := [T, M^{\intercal}]^{\intercal} = B u,
\end{equation}
where $u \in \mathbb{R}^{m_{\text{rtr}}}$ is the rotor thrust vector,
and $B \in \mathbb{R}^{4 \times m_{\text{rtr}}}$ is the control effectiveness matrix depending on the multicopter configuration,
e.g., quadcopter, hexacopter, etc.
Then, the state is
$x := [p^{\intercal}, v^{\intercal}, \text{vec}(R)^{\intercal}, \omega^{\intercal}]^{\intercal} \in \mathbb{R}^{18}$
with control input $\nu \in \mathbb{R}^{4}$,
where $\text{vec}(\cdot)$ is the vectorize operator.
\autoref{fig:hexacopter_illustration} illustrates a hexacopter ($m_{\text{rtr}} = 6$).
\begin{figure}[!h]
    \centering
    \begin{subfigure}[b]{0.24\textwidth}
        \centering
        \includegraphics[width=\linewidth]{./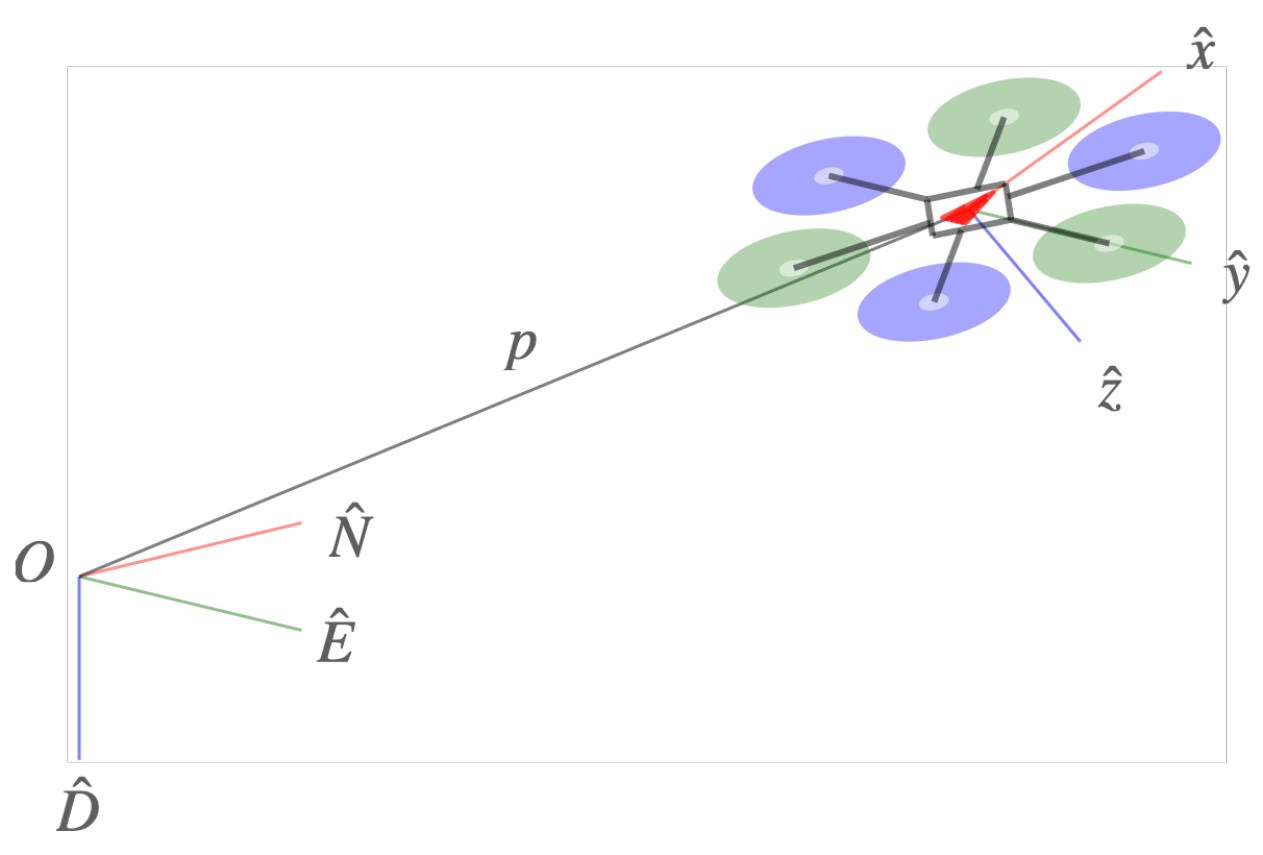}
        \caption{}
    \end{subfigure}
    \begin{subfigure}[b]{0.20\textwidth}
        \centering
        \includegraphics[width=\linewidth]{./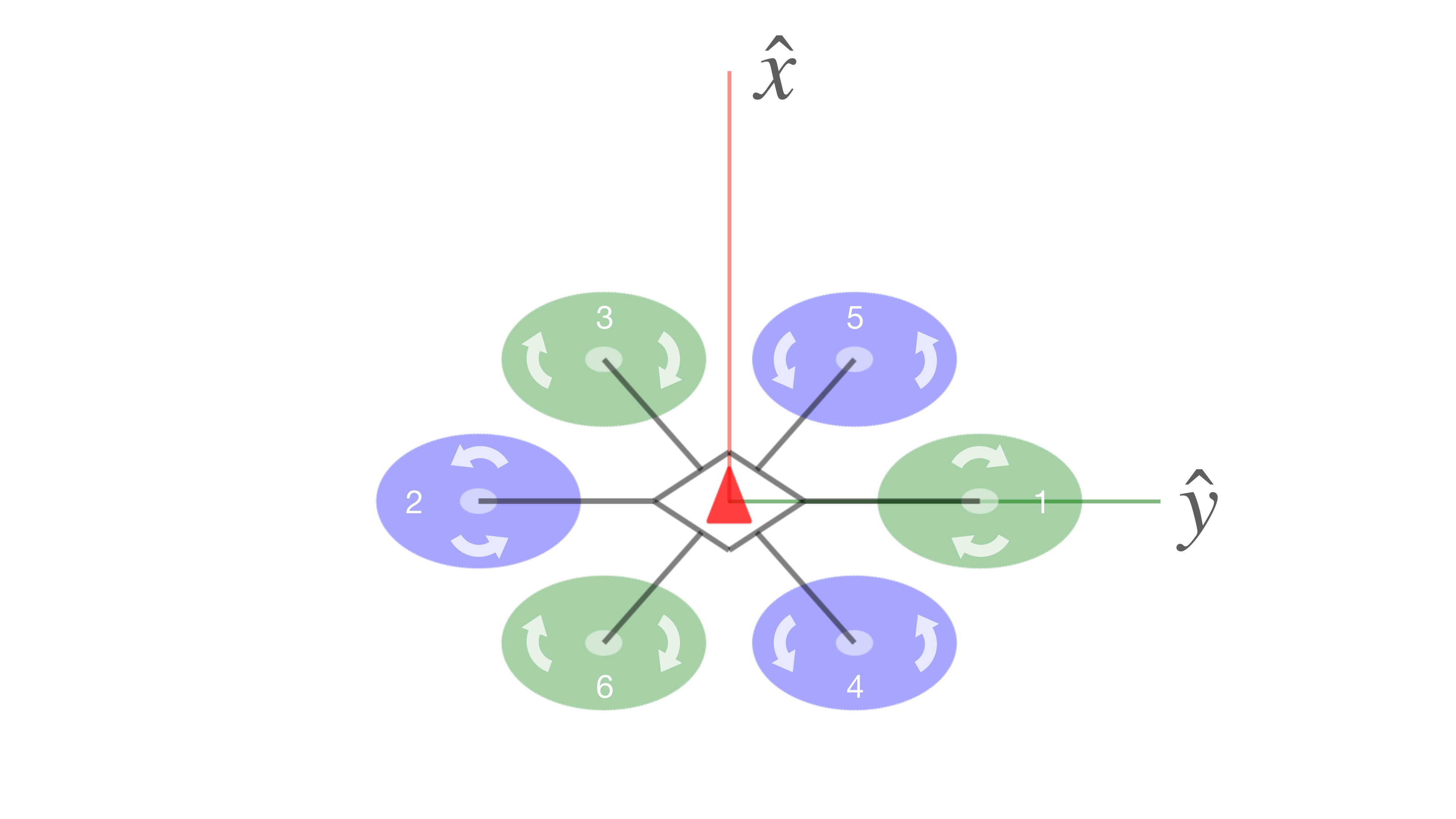}
        \caption{}
    \end{subfigure}
    \caption{
      Illustration of multicopter. (a) Multicopter with six rotors (hexacopter-x) and frames, (b) top view
    }
    \label{fig:hexacopter_illustration}
\end{figure}

\section{Main Results}
\label{sec:main_results}
\subsection{Multicopter dynamics reformulation}
\label{sec:reformulation}
The main challenge to utilize the CBF methods for multicopter is that
the safe backstepping is not directly applicable to the multicopter dynamics in \autoref{sec:multicopter_dynamics},
which suffers from the mixed-relative-degree challenge and is not strict feedback form.
For example, the term $-\frac{1}{m} T R e_{3}$ in \eqref{eq:v_dot}
is formed as the multiplication of
a control input $T$ and an outer-loop state $z_{B} := R e_{3} \in \mathbb{S}^{2}$,
representing the $z$-axis of the body-fixed frame.

To deal with this issue,
let us reformulate the multicopter dynamics
based on Lyapunov backstepping multicopter controller~\cite{falconiAdaptiveFaultTolerant2016}.
First, the force vector $f \in \mathbb{R}^{3}$ is expressed as $f = -T z_{B}$.
By augmenting the total thrust $T$ as a state variable
and its time derivative $\dot{T}$ as an input variable,
\eqref{eq:v_dot} can be reformulated as

\begin{equation}
  \label{eq:f_dot}
  \dot{v} = g e_{3} + \frac{1}{m} f,
  \quad
  \dot{f} = - T R A \omega_{xy} -\dot{T} z_{B},
\end{equation}
where $A = [0, 1; -1, 0; 0, 0] \in \mathbb{R}^{3 \times 2}$ is an auxiliary matrix,
and $\omega_{xy} := [\omega_{x}, \omega_{y}]^{\intercal} \in \mathbb{R}^{2}$.
Augmented state and control input are defined as
$x_{a} := [x^{\intercal}, T]^{\intercal} \in \mathbb{R}^{n_{a}}$
and $\nu_{a} := [\dot{T}, M^{\intercal}]^{\intercal} \in \mathbb{R}^{4}$,
respectively.

\subsection{Safety in rotational dynamics}
In this study,
angular speed and total thrust direction limits
are considered as the safety constraints in rotational dynamics~\cite{kimSafeAttitudeController2023}.

\subsubsection{Angular velocity safety}
Consider the following function, $h_{\omega}(x) = 1 - \omega^{\intercal} P_{\omega} \omega$,
where $P_{\omega} \in \mathbb{R}^{3 \times 3}$ is a positive semi-definite diagonal matrix for angular velocity constraint.
For example, $h_{\omega} \geq 0 \Longleftrightarrow \lVert \omega \rVert \leq \overline{\omega}$
if $P_{\omega} = \text{diag}(1/\overline{\omega}^{2}, 1/\overline{\omega}^{2}, 1/\overline{\omega}^{2})$
for given maximum angular speed $\overline{\omega} \in \mathbb{R}_{>0}$.

Because the relative degree of $h_{\omega}$ is one,
the angular velocity safety is guaranteed by the conventional CBF method
with the control input satisfying the following inequality,
\begin{equation}
  \label{eq:angular_velocity_safety_constraint}
  \dot{h}_{\omega}(x, M) + \alpha_{\omega}(h_{\omega}(x)) \geq 0,
\end{equation}
with $\alpha_{\omega} \in \mathcal{K}_{\infty}^{e}$.

\subsubsection{Total thrust direction safety}
Given unit vector $z_{B_{d}} \in \mathbb{S}^{2}$ indicating the desired $z$-axis in the inertial frame,
consider the following function, $h_{0, z_{B}}(x) = z_{B}^{\intercal} z_{B_{d}} - \cos(\overline{\theta}_{z_{B}})$,
where $\overline{\theta}_{z_{B}} \in [0, \pi]$ is the maximum deviation angle of body-fixed $z$-axis.
Note that $h_{0, z_{B}} \geq 0 \Longleftrightarrow \theta_{z_{B}} \leq \overline{\theta}_{z_{B}}$,
where $\theta_{z_{B}} := \arccos(z_{B}^{\intercal} z_{B_{d}}) \in [0, \pi]$.

Because the relative degree of $h_{0, z_{B}}$ is two,
the total thrust direction safety can be guaranteed by the high-order CBF method
with the control input satisfying the following inequality,
\begin{equation}
  \label{eq:total_thrust_direction_safety_constraint}
  \begin{split}
    h_{1, z_{B}}(x) &:= \dot{h}_{0, z_{B}}(x) + \alpha_{1, z_{B}}(h_{0, z_{B}}(x)),
    \\
    h_{2, z_{B}}(x, M) &:= \dot{h}_{1, z_{B}}(x, M) + \alpha_{2, z_{B}}(h_{1, z_{B}}(x)) \geq 0,
  \end{split}
\end{equation}
with functions $\alpha_{1, z_{B}}, \alpha_{2, z_{B}} \in \mathcal{K}_{\infty}^{e}$.

\subsection{Safety in translational dynamics}
In this study,
velocity and position limits are considered as the safety constraints in translational dynamics.

\subsubsection{Velocity safety}
Consider the following function, $h_{0, v}(x) = 1 - v^{\intercal} P_{v} v$,
where $P_{v} \in \mathbb{R}^{3 \times 3}$ is a positive semi-definite digonal matrix for velocity constraint.
For example, $h_{0, v} \geq 0 \Longleftrightarrow \lVert v \rVert \leq \overline{v}$
if $P_{v} = \text{diag}(1/\overline{v}^{2}, 1/\overline{v}^{2}, 1/\overline{v}^{2})$
for given maximum speed $\overline{v} \in \mathbb{R}_{>0}$.

Because the dynamics related to $h_{0, v}$ are in strict feedback form with the reformulation,
the velocity safety is guaranteed by the safe backstepping with the control input satisfying the following inequality,
\begin{equation}
  \begin{split}
  h_{v}(x_{a})
  &:= h_{0, v}(x)
  \\
  &- \sum_{i=1}^{2} \frac{1}{2 \mu_{i, v}} \lVert \xi_{i, v} - k_{i-1, \xi_{v}}(z_{i-1, v}, t) \rVert^{2},
  \\
  \label{eq:velocity_safety_constraint}
  &\dot{h}_{v}(x_{a}, \nu_{a}) + \alpha_{0, v}(h_{v}(x_{a}))
  \geq 0,
  \end{split}
\end{equation}
for $\mu_{i, v} \in \mathbb{R}_{>0}$
where $\xi_{0, v} := v$, $\xi_{1, v} := f$, $\xi_{2, v} := \omega_{xy}$,
$u_{1, v} := \dot{T}$,
$u_{2, v} := M$,
and the controllers $k_{i-1, \xi_{v}}$ are designed by the safe backstepping procedure for $i \in \{1, 2\}$.

\subsubsection{Position safety}
Consider the following function, $h_{0, p}(x) = 1 - (p - p_{d})^{\intercal} P_{p} (p - p_{d})$,
where $p_{d} \in \mathbb{R}^{3}$ is the center of the position safety set,
and $P_{p} \in \mathbb{R}^{3 \times 3}$ is a positive semi-definite weight matrix for position constraint.
For example, $h_{0, p} \geq 0 \Longleftrightarrow \lVert p - p_{d} \rVert \leq \overline{p}$
if $P_{p} = \text{diag}(1/\overline{p}^{2}, 1/\overline{p}^{2}, 1/\overline{p}^{2})$
for given maximum displacement $\overline{p} \in \mathbb{R}_{>0}$.

Similar to the case of velocity safety,
the position safety is guaranteed by the safe backstepping with the control input satisfying the following inequality,
\begin{equation}
  \label{eq:position_safety_constraint}
  \begin{split}
  h_{p}(x_{a})
  &:= h_{0, p}(x)
  \\
  &- \sum_{i=1}^{3} \frac{1}{2 \mu_{i, p}} \lVert \xi_{i, p} - k_{i-1, \xi_{p}}(z_{i-1, p}, t) \rVert^{2},
  \\
  &\dot{h}_{p}(x_{a}, \nu_{a}) + \alpha_{0, p}(h_{p}(x_{a})) \geq 0,
  \end{split}
\end{equation}
for $\mu_{i, p} \in \mathbb{R}_{>0}$
where $\xi_{0, p} := p$, $\xi_{1, p} := v$, $\xi_{2, p} := f$, $\xi_{3, p} := \omega_{xy}$,
$u_{2, p} := \dot{T}$,
$u_{3, p} := M$,
and the controllers $k_{i-1, \xi_{p}}$ are designed by the safe backstepping procedure for $i \in \{1, 2, 3\}$.
\begin{remark}
  The arguments of the designed controllers for safety with respect to velocity and position,
  $k_{i-1, \xi_{v}}$ in \eqref{eq:velocity_safety_constraint} and $k_{i-1, \xi_{p}}$ in \eqref{eq:position_safety_constraint},
  have $t$ because $g_{1, \xi, v} = -T R A$ and $g_{1, u, v} = -z_{B}$
  are not explicitly expressed by $z_{1, v} = [\xi_{0, v}^{\intercal}, \xi_{1, v}^{\intercal}]^{\intercal}$.
\end{remark}

\subsection{Proposed safe controller}
For the safe backstepping in safety with respect to velocity and position,
virtual controllers $k_{0, \xi_{v}}$ and $k_{0, \xi_{p}}$
should be designed first.
Consider the proposed virtual controllers
$k_{0, \xi_{v}}$ and $k_{0, \xi_{p}}$
and functions $\alpha_{0, v} \in \mathcal{K}_{\infty}^{e} $ and $\alpha_{0, p} \in \mathcal{K}_{\infty}^{e} $
for safe backstepping:
\begin{align}
  \label{eq:velocity_virtual_controller}
  k_{0, \xi_{v}}(z_{0, v}, t) &= -m
  \left(
    g e_{3} + \frac{1}{2} c_{v} v
  \right).
  \\
  \label{eq:position_virtual_controller}
  k_{0, \xi_{p}}(z_{0, p}, t) &= -\frac{1}{2} c_{p} \left(p - p_{d}\right),
  \\
  \label{eq:velocity_alpha}
  \alpha_{0, v}(r) &= a_{0, v} r
  \\
  \alpha_{0, p}(r) &= a_{0, p} r
\end{align}
for $a_{0_{v}}, a_{0_{p}}, c_{v}, c_{p} \in \mathbb{R}_{>0}$
such that $c_{i} \geq a_{0, i}$ for $i \in \{v, p\}$.

An optimization-based safety-critical controller $k$ for multicopter is proposed as follows,
\begin{equation}
  \label{eq:proposed_controller}
  \begin{split}
    k(x_{a}) &= \argmin_{\nu_{a} \in \mathbb{R}^{4}} \frac{1}{2}
    \lVert
    \nu_{a} - k_{d}(x_{a})
    \rVert^{2}
    \\
    \text{s.t. } &
    \eqref{eq:angular_velocity_safety_constraint},
    \eqref{eq:total_thrust_direction_safety_constraint},
    \eqref{eq:velocity_safety_constraint},
    \eqref{eq:position_safety_constraint},
  \end{split}
\end{equation}
where $k_{d}$ is a nominal controller.
The following theorem shows the main result of this study.

\begin{theorem}[Main result]
  Given a nominal controller $k_{d}$,
  suppose that the initial augmented state $x_{a}(0)$ is in $\cap_{i \in \{\omega, z_{B}, v, p\}} C_{i}$,
  where the sets are defined as follows,
  \begin{align}
    C_{\omega} &:= \{x_{a} \in \mathbb{R}^{n_{a}} \vert h_{\omega}(x) \geq 0 \},
    \\
    C_{z_{B}} &:=\cap_{i \in \{0, 1\}} \{x_{a} \in \mathbb{R}^{n_{a}} \vert h_{i, z_{B}}(x) \geq 0 \},
    \\
    C_{v} &:= \{x_{a} \in \mathbb{R}^{n_{a}} \vert h_{v}(x_a) \geq 0 \},
    \\
    C_{p} &:= \{x_{a} \in \mathbb{R}^{n_{a}} \vert h_{p}(x_a) \geq 0 \}.
  \end{align}
  Then,
  the state $x(t)$ remains in the safety constraints for all $t \in \mathbb{R}_{\geq 0}$,
  that is,
  $h_{\omega}(x(t)) \geq 0$,
  $h_{0, z_{B}}(x(t)) \geq 0 $,
  $h_{0, v}(x(t)) \geq 0 $,
  $h_{0, p}(x(t)) \geq 0$,
  $\forall t \in \mathbb{R}_{\geq 0}$,
  under the proposed controller in \eqref{eq:proposed_controller}
  with virtual controllers \eqref{eq:velocity_virtual_controller} and \eqref{eq:position_virtual_controller}.
  Also, the optimization problem in \eqref{eq:proposed_controller} is a QP with affine inequality constraints.
\end{theorem}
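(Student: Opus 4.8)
The plan is to establish the two assertions of the theorem separately: (i) forward invariance of each of the four safety sets under the controller \eqref{eq:proposed_controller}, and (ii) the claim that \eqref{eq:proposed_controller} is a quadratic program with affine inequality constraints. A reduction simplifies (i): since $h_{v}(x_{a}) = h_{0,v}(x) - \sum_{i=1}^{2}\frac{1}{2\mu_{i,v}}\lVert\xi_{i,v}-k_{i-1,\xi_{v}}\rVert^{2}\le h_{0,v}(x)$ and likewise $h_{p}\le h_{0,p}$, while $\bigcap_{i\in\{0,1\}}\{h_{i,z_{B}}\ge 0\}\subseteq\{h_{0,z_{B}}\ge 0\}$, the four scalar inequalities on $x(t)$ follow at once from $h_{\omega}(x(t))\ge 0$, $h_{v}(x_{a}(t))\ge 0$, $h_{p}(x_{a}(t))\ge 0$, and invariance of $\bigcap_{i\in\{0,1\}}\{h_{i,z_{B}}\ge 0\}$.

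For the rotational constraints I would invoke the CBF and HOCBF machinery of \autoref{sec:cbf_and_safe_backstepping} directly. The function $h_{\omega}$ has relative degree one with respect to $M$: $\dot{h}_{\omega}=-2\omega^{\intercal}P_{\omega}J^{-1}(M-\omega\times J\omega)$, and the coefficient $-2\omega^{\intercal}P_{\omega}J^{-1}$ of $M$ is nonzero whenever $h_{\omega}(x)=0$ (there $\omega^{\intercal}P_{\omega}\omega=1$, so $P_{\omega}\omega\neq 0$), hence $h_{\omega}$ is a CBF, \eqref{eq:angular_velocity_safety_constraint} defines a nonempty control set, and since the QP enforces membership in it and $k$ is locally Lipschitz where defined, $h_{\omega}$ is a barrier function for the closed loop, so $C_{\omega}$ is forward invariant. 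The function $h_{0,z_{B}}$ has relative degree two with respect to $M$ (the $z_{B}$-dynamics depend on $\omega_{xy}$, whose dynamics contain $M$), so the HOCBF construction applies verbatim: enforcing $h_{2,z_{B}}(x,M)\ge 0$ as in \eqref{eq:total_thrust_direction_safety_constraint} with $x_{a}(0)\in C_{z_{B}}$ renders $\bigcap_{i\in\{0,1\}}\{h_{i,z_{B}}\ge 0\}$ forward invariant.

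For the translational constraints the key step is to verify the safe-backstepping hypothesis \eqref{eq:safe_backstepping_assumption} for the proposed virtual controllers and linear gains. Substituting \eqref{eq:velocity_virtual_controller} into the $v$-subsystem $\dot{v}=ge_{3}+\frac{1}{m}f$ gives $ge_{3}+\frac{1}{m}k_{0,\xi_{v}}=-\frac{1}{2}c_{v}v$, so, using $\frac{\partial h_{0,v}}{\partial v}=-2v^{\intercal}P_{v}$, the left-hand side of \eqref{eq:safe_backstepping_assumption} equals $c_{v}\,v^{\intercal}P_{v}v$, whereas $-\alpha_{0,v}(h_{0,v})=-a_{0,v}+a_{0,v}v^{\intercal}P_{v}v$; the required strict inequality reduces to $(c_{v}-a_{0,v})\,v^{\intercal}P_{v}v>-a_{0,v}$, which holds because $c_{v}\ge a_{0,v}$ and $P_{v}$ is positive semi-definite. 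The identical computation with $p-p_{d}$ in place of $v$ disposes of \eqref{eq:position_virtual_controller}. Since $\alpha_{0,v}$ and $\alpha_{0,p}$ are linear, hence globally Lipschitz and in $\mathcal{K}_{\infty}^{e}$, \autoref{thm:safe_backstepping} (in the time-varying form discussed below) yields that $h_{v}$ and $h_{p}$ are CBFs for the reformulated system on $C_{v}$ and $C_{p}$; enforcing \eqref{eq:velocity_safety_constraint} and \eqref{eq:position_safety_constraint} from the stated initial conditions then makes $C_{v}$ and $C_{p}$ forward invariant. Finally, because $h_{\omega},h_{v},h_{p}\ge 0$ keep $\omega$, $v$, $p$ bounded (the weight matrices being positive definite) and, via the squared terms, $f$ — hence $|T|=\lVert f\rVert$ — bounded, while $R\in\mathrm{SO}(3)$ is compact, the closed-loop trajectory stays in a compact set and is complete, giving the conclusion for all $t\in\mathbb{R}_{\ge 0}$.

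For (ii) I would first note that the reformulated augmented dynamics are input-affine, $\dot{x}_{a}=F(x_{a})+G(x_{a})\nu_{a}$, with $\dot{f}$ affine in $\dot{T}$ by \eqref{eq:f_dot}, $\dot{\omega}$ affine in $M$ by \eqref{eq:omega_dot}, and every other component independent of $\nu_{a}$. Each of the four constraints is the time derivative of a function of $(x_{a},t)$ plus a $\mathcal{K}_{\infty}^{e}$ term of $(x_{a},t)$; differentiating along $\dot{x}_{a}=F+G\nu_{a}$ yields, in every case, a row vector depending on $(x_{a},t)$ times $\nu_{a}$ plus terms in $(x_{a},t)$ alone, i.e. an affine inequality in $\nu_{a}$. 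The only way the time-varying variation could spoil this is through the derivatives $\dot{k}_{i-1,\xi}$ entering the backstepping controllers and $\dot{h}_{v},\dot{h}_{p}$; but each $k_{i,\xi}$ is a fixed function of $(x_{a},t)$ (the explicit $t$-dependence arising only because $g_{1,\xi,v}=-TRA$ and $g_{1,u,v}=-z_{B}$ are carried as time-varying coefficients, per the Remark), so its total derivative is again a function of $(x_{a},t)$ composed with $\dot{x}_{a}$, which is itself affine in $\nu_{a}$, introducing no higher-order dependence. With the objective $\frac{1}{2}\lVert\nu_{a}-k_{d}(x_{a})\rVert^{2}$ being a strictly convex quadratic, \eqref{eq:proposed_controller} is then a QP with affine inequality constraints. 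The main obstacle I anticipate is exactly this time-varying extension of \autoref{thm:safe_backstepping}, which is stated for an autonomous strict-feedback system whereas here $g_{1,\xi,v}$ and $g_{1,u,v}$ are not functions of $z_{1,v}$; I expect it to go through because the backstepping construction is designed to cancel precisely those coefficients and the Lyapunov-type inequality underlying the theorem is evaluated pointwise in $t$, provided the $\dot{k}_{i-1,\xi}$ terms include the extra $\partial/\partial t$ contributions. A secondary subtlety is joint feasibility of the four-constraint QP (all four constraints couple through $M$, with $\dot{T}$ shared by the last two), which one should either obtain from genericity of the four constraint gradients acting on the four-dimensional input $\nu_{a}$ or impose as a standing assumption so that $k$ in \eqref{eq:proposed_controller} is well defined along the trajectory.
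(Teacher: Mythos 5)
Your proposal is correct and, for the safety half, coincides with the paper's argument: you verify \eqref{eq:safe_backstepping_assumption} for \eqref{eq:velocity_virtual_controller} by the same computation the paper performs (your $(c_{v}-a_{0,v})\,v^{\intercal}P_{v}v > -a_{0,v}$ is exactly the paper's $a_{0,v} + (c_{v}-a_{0,v})\,v^{\intercal}P_{v}v > 0$), defer the $\omega$ and $z_{B}$ constraints to the standard CBF/HOCBF results, and treat the position case as a copy of the velocity case. Where you genuinely diverge is the QP-affineness claim, which is the part the paper flags as non-trivial. The paper isolates the one suspicious term, $\frac{\partial k_{1,\xi_{v}}}{\partial t}$, and computes it explicitly in \eqref{eq:laborious_calculation}, showing that the input enters only through $\frac{d g_{1,v}^{-1}}{dt} = -g_{1,v}^{-1}\dot{g}_{1,v}g_{1,v}^{-1}$ with $\dot{g}_{1,v}$ affine in $\dot{T}$. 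You instead argue structurally: every quantity appearing in the constraints (including $k_{1,\xi_{v}}$, whose explicit ``$t$'' merely bookkeeps the states $T,R,\omega$ outside $z_{1,v}$) is a function of the augmented state $x_{a}$, the augmented dynamics are input-affine in $\nu_{a}$, and a single application of the chain rule to a $\nu_{a}$-independent function of $x_{a}$ can only produce affine dependence on $\nu_{a}$. This is cleaner, explains \emph{why} the paper's explicit calculation had to come out affine, and generalizes to other choices of virtual controller; the paper's explicit route has the advantage of exhibiting the actual constraint coefficients needed to implement the QP. Your additional observations (the reduction $h_{v}\le h_{0,v}$, the nonvanishing gradient of $h_{\omega}$ on its zero level set, completeness of solutions) fill in steps the paper leaves implicit, and the two gaps you flag --- feasibility of the four simultaneous constraints and the time-varying extension of \autoref{thm:safe_backstepping} --- are present in the paper's proof as well, so acknowledging them does not put you behind the published argument; the only regularity condition worth adding explicitly is invertibility of $g_{1,v}$, which requires $T\neq 0$.
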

\begin{proof}
  It is straightforward to show the safety satisfaction for angular velocity and total thrust direction
  based on the conventional CBF and HOCBF introduced in \autoref{sec:cbf_and_safe_backstepping},
  and the safe backstepping procedures for safety with respect to velocity and position are similar to each other.
  Therefore, only the velocity safety will be discussed in the proof.
  Let us show that the virtual controller in \eqref{eq:velocity_virtual_controller}
  satisfies \eqref{eq:safe_backstepping_assumption}.
  Taking $\alpha_{0, v}$ in \eqref{eq:velocity_alpha} implies
  \begin{equation}
    \begin{split}
    &\frac{\partial h_{0, v}}{\partial \xi_{0, v}}(z_{0, v}) (
      f_{0, v}(z_{0, v}) + g_{0, \xi_{v}}(z_{0, v}) k_{0, \xi_{v}} (z_{0, v})
    )
    \\
    &+ \alpha_{0, v}(h_{0, v}(z_{0, v}))
    = a_{0, v} + (c_{v} - a_{0, v}) v^{\intercal} P_{v} v > 0,
    \end{split}
  \end{equation}
  and the velocity safety is guaranteed by \autoref{thm:safe_backstepping}.

  Next, let us show that the resulting optimization problem is a QP with affine inequality constraints.
  Similar to the previous part,
  it is straightforward to show that
  \eqref{eq:angular_velocity_safety_constraint} and \eqref{eq:total_thrust_direction_safety_constraint}
  are affine in $M$.
  Also, the procedures for safety with respect to velocity and position are similar to each other.
  Again, only the velocity safety will be discussed.
  The main concern comes from $\frac{\partial k_{1, \xi_{v}}}{\partial t} (z_{1, v}, t)$
  with the fact that $\frac{\partial k_{0, \xi_{v}}}{\partial t} (z_{0, v}, t) \equiv 0$
  because the argument $t$ does not explicitly appear in the right-hand sides
  of \eqref{eq:velocity_virtual_controller} and \eqref{eq:position_virtual_controller}.
  After laborious calculations,
  we have
  \begin{equation}
    \label{eq:laborious_calculation}
    \begin{split}
      &\frac{\partial }{\partial t} \begin{bmatrix}
        k_{1, \xi_{v}}(z_{1, v}, t)
        \\
        k_{1, u_{v}}(z_{1, v}, t)
      \end{bmatrix}
      = \frac{d g_{1, v}^{-1}}{dt} \biggl(
        - f_{1, v}
        + \mu_{1, v} \left(\frac{\partial h_{0, v}}{\partial \xi_{0, v}} g_{0, \xi_{v}}\right)^{\intercal}
        \\
      &+ \frac{\partial k_{0, \xi_{v}}}{\partial \xi_{0, v}} \left(
          f_{0, v} + g_{0, \xi} \xi_{1, v}
        \right)
        - \frac{\lambda_{1, v}}{2} \left(\xi_{1, v} - k_{0, \xi_{v}}\right)
      \biggr).
    \end{split}
  \end{equation}
  Note that function arguments are omitted in the above equation for better readability.
  In the right-hand side of \eqref{eq:laborious_calculation},
  the terms except for $\frac{d g_{1, v}^{-1}}{d t}$ does not contain any control inputs.
  Because $\frac{d g_{1, v}^{-1}}{d t} = -g_{1, v}^{-1} \dot{g}_{1, v} g_{1, v}^{-1} $
  and $\dot{g}_{1, v} = -\left[\left(\dot{T}R + TR \omega^{\times}\right)A, RA \omega_{xy}\right] \in \mathbb{R}^{3 \times 3}$,
  it can be shown that the control input $\dot{T}$ is affine in the inequality.
  Note also that the terms affine in $M$ comes from the rest by following the original safe backstepping
  in \autoref{thm:safe_backstepping}.
  Therefore, the resulting inequality is affine in $\nu_{a}$.
\end{proof}
It should be pointed out that
\textbf{the affine inequality result is not trivial}.
This supports that the reformulation makes it possible
not just to incorporate the safe backstepping but also to incorporate
state-of-the-art efficient and reliable convex optimization solvers
as well as the previous analyses on safety control with QP formulation,
for example, Lipschitzness of the controller under regularity conditions~\cite{jankovicRobustControlBarrier2018}.


\section{Numerical simulation}
\label{sec:numerical_simulation}
Numerical simulation is performed with a hexacopter model~\cite{kimControlAllocationSwitching2021},
which adopts the specification of a quadcopter model~\cite{leeGeometricTrackingControl2010} and the control allocation for hexacopter model with hexacopter-X configuration~\cite{px4}.
A Lyapunov backstepping position controller is used as the nominal controller in \eqref{eq:proposed_controller}
with yaw rate regulation~\cite{falconiAdaptiveFaultTolerant2016}.
The reference trajectory is given as $p_{\text{ref}}(t) = [2.5 \cos (0.5 t), 2.5 \sin (0.5 t), -5 + 2.5 \sin (0.25 t)]^{\intercal}$.
Pseudo-inverse control allocation is used,
and each rotor thrust is saturated as $u_{i} \in [0, 0.6371 mg]$.
The bounds are based on \cite{achtelikDesignMultiRotor2012}, slightly modified for hexacopter.
For each time instant with time step $\Delta t = 0.005$s,
the optimization problem \eqref{eq:proposed_controller} is solved by ECOS~\cite{domahidiECOSSOCPSolver2013} to obtain the zero-order-hold control input.

The simulation settings are as follows:
$p(0) = [0, 0, -5]^{\intercal}$m,
$v(0) = [0, 1.25, 0.625]^{\intercal}$m/s,
$(\phi, \theta, \psi) = (-15, 15, 90)$deg,
$\omega = [15, 15, 0]^{\intercal}$deg/s,
$\overline{\omega} = 360$deg/s,
$\overline{\theta}_{z_{B}} = 30$deg,
$\overline{v} = 2$m/s,
$\overline{p} = 3$m,
and $p_{d} = [0, 0, -5]^{\intercal}$m,
where
$\phi$, $\theta$, $\psi$ denote roll, pitch, yaw with ZYX rotation, respectively,
and $P_{i} = \text{diag}(1/\overline{i}^{2}, 1/\overline{i}^{2}, 1/\overline{i}^{2})$ for $i \in \{\omega, v, p\}$.
Every class $\mathcal{K}_{\infty}^{e}$ function is set as a linear function.

\autoref{fig:simulation_result}
shows the simulation result of the position controller with and without the proposed method.
The computation time of the optimization problem is $2$ms on AMD Ryzen 9 5900HS CPU.
At the beginning of the response,
the nominal controller shows faster convergence to the reference trajectory
with high speed, total thrust direction deviation, and angular speed.
Because the nominal controller does not consider the position safety,
the distance from $p_{d}$ is larger than $\overline{p}$ for a long time duration,
which indicates that the position is not in the safe set.
On the other hand,
the proposed safe controller mitigates the excessive maneuver in the beginning,
which results in that the safety constraints are all satisfied.
It is also interesting to look at the response between $10$s and $15$s;
the safe controller does not need to interfere
the nominal controller for position safety,
and the response becomes the same as that of nominal controller.
After that, to maintain the safe behaviour,
the proposed method interferes the nominal controller again.
In summary,
the proposed method does not violate any safety constraints considered in this study,
while the nominal controller does.
It should be pointed out that the proposed method does not utilize cascade control system
in which high-level safety constraints may be violated~\cite{khanBarrierFunctionsCascaded2020}.

\begin{figure*}[!ht]
    \centering
    \begin{subfigure}[b]{0.42\textwidth}
        \centering
        \includegraphics[width=\linewidth]{./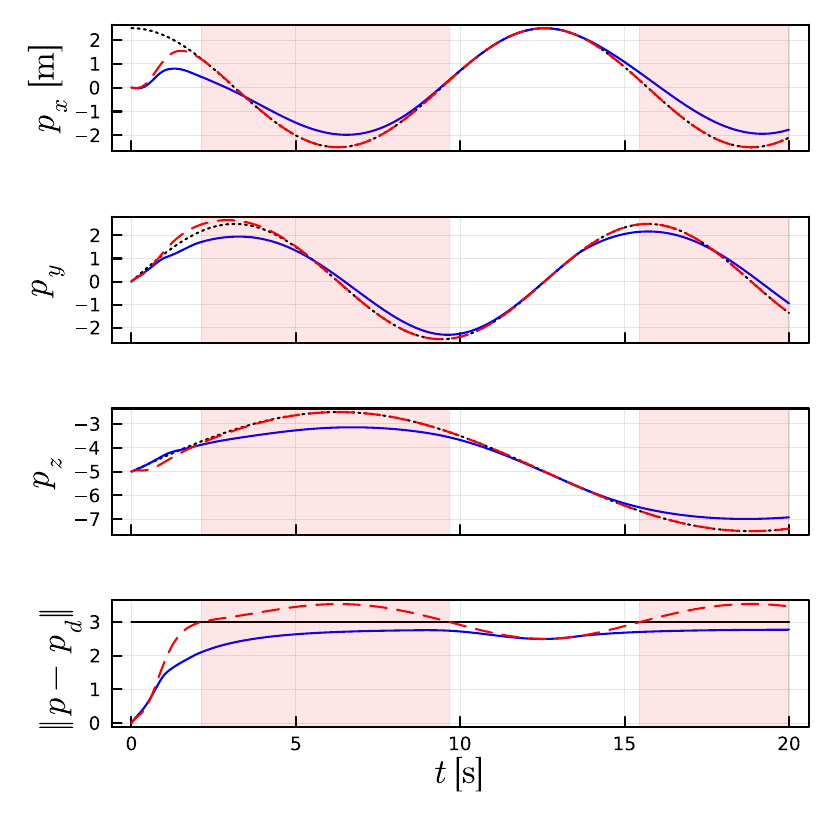}
        \caption{Position}
    \end{subfigure}
    \begin{subfigure}[b]{0.42\textwidth}
        \centering
        \includegraphics[width=\linewidth]{./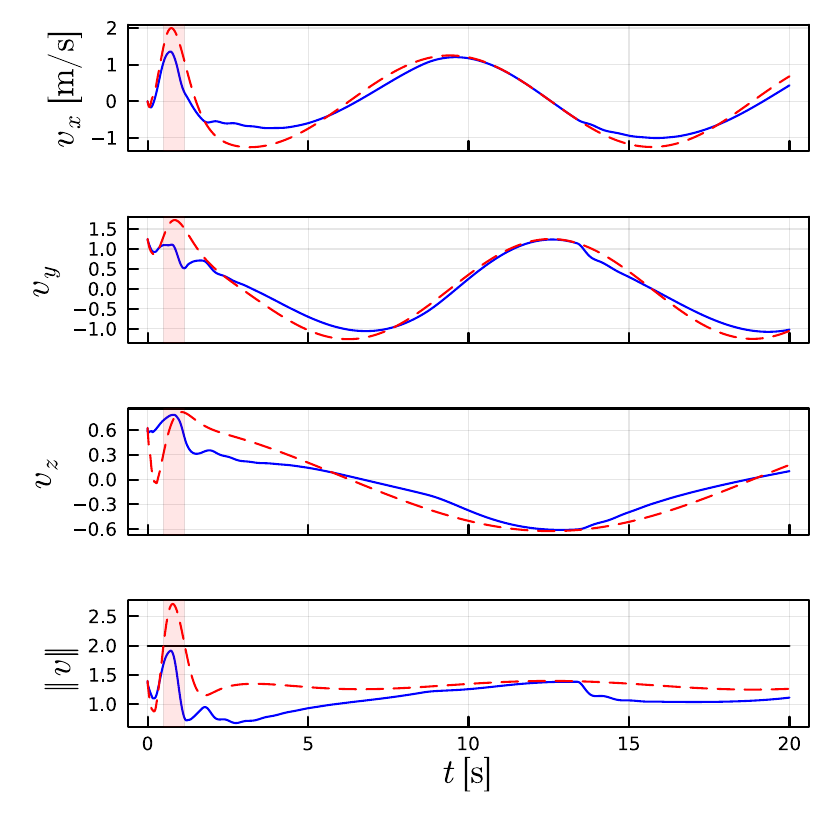}
        \caption{Velocity}
    \end{subfigure}

    \begin{subfigure}[b]{0.42\textwidth}
        \centering
        \includegraphics[width=\linewidth]{./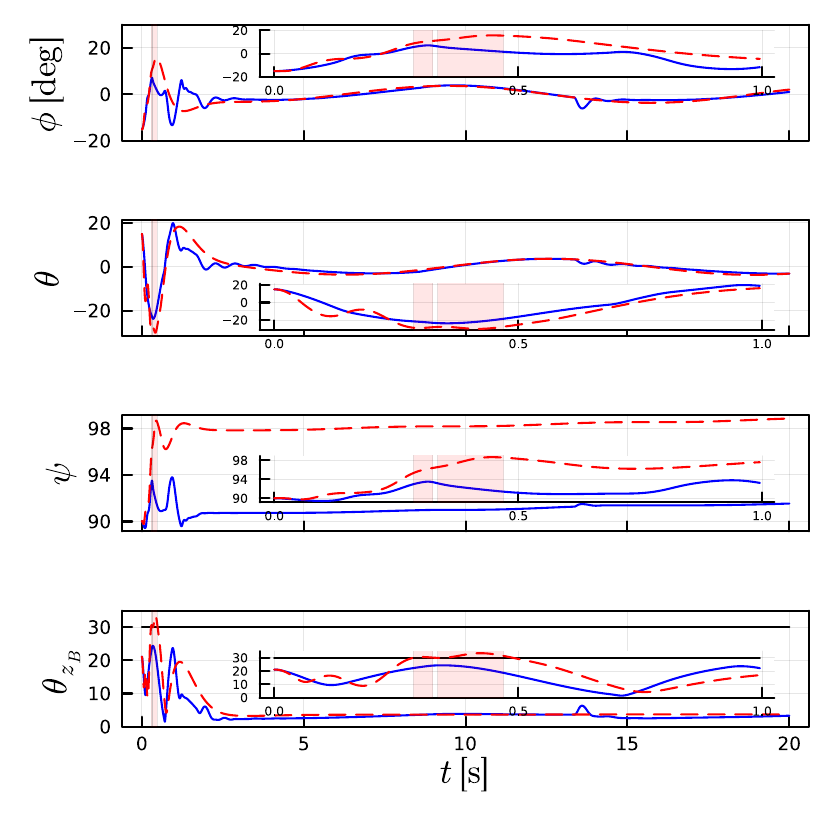}
        \caption{Euler angle}
    \end{subfigure}
    \begin{subfigure}[b]{0.42\textwidth}
        \centering
        \includegraphics[width=\linewidth]{./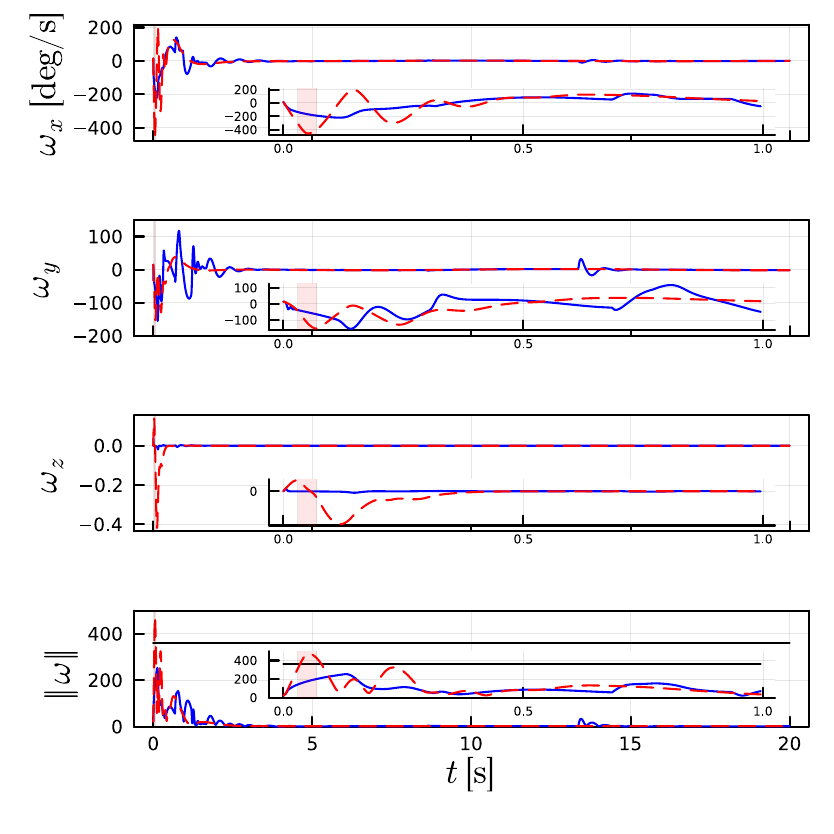}
        \caption{Angular velocity}
    \end{subfigure}

    \begin{subfigure}[b]{0.84\textwidth}
        \centering
        \includegraphics[width=\linewidth]{./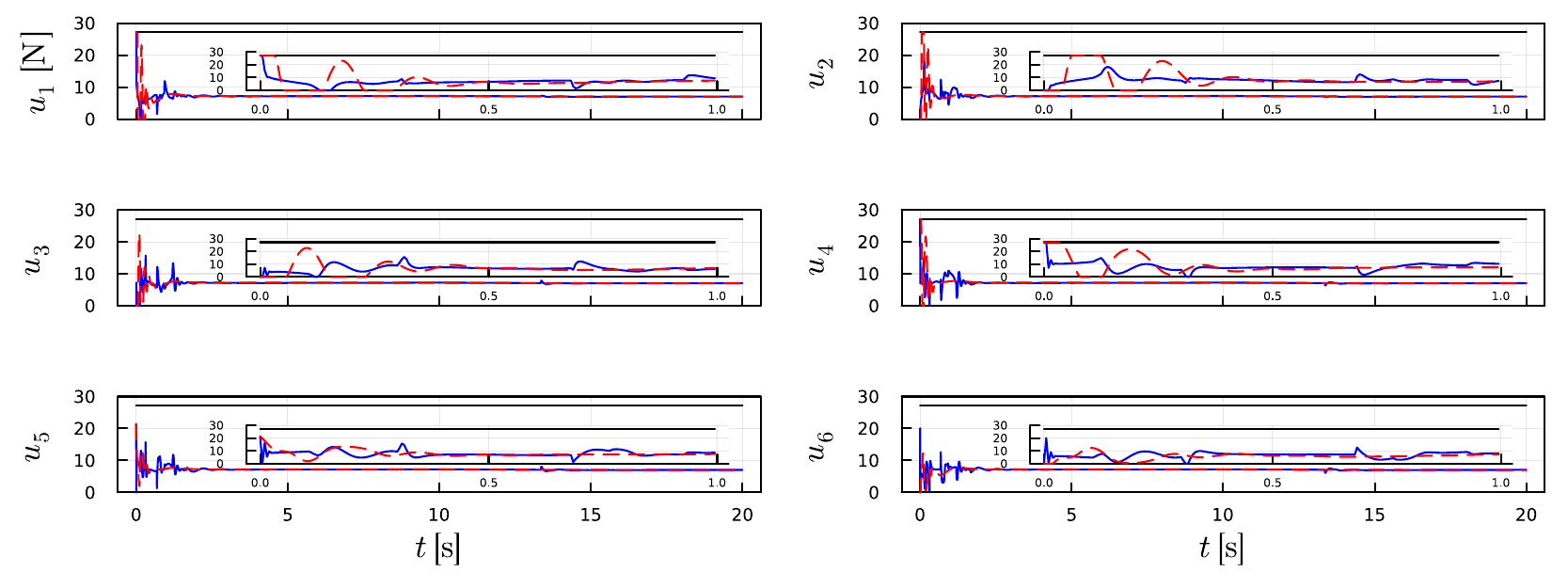}
        \caption{Rotor thrust}
    \end{subfigure}
    \caption{
      Blue solid and red dashed lines denote the trajectories with and without the proposed method,
      respectively.
      Black solid lines denote the maximum values of corresponding safety constraints or applied rotor thrusts.
      Red backgrounds indicate safety violation for nominal controller.
      (a) position and the magnitude of position error (black dotted line: reference trajectory),
      (b) velocity and speed,
      (c) Euler angles and z-axis angle,
      (d) angular velocity and angular speed,
      and (e) rotor thrust.
    }
    \label{fig:simulation_result}
\end{figure*}

\section{Conclusion}
\label{sec:conclusion}
A safe controller was proposed for multicopter considering
various safety constraints with respect to angular velocity, total thrust direction, velocity, and position.
Multicopter dynamics were reformulated to make the dynamics be in strict feedback form,
and conventional and high-order control barrier functions (CBFs) and CBF backstepping
were utilized to satisfy the safety constraints.
The proposed safe controller does not consider a cascade controller design.
Numerical simulation demonstrated that multicopter can track a given reference position trajectory
using the proposed controller
without violating safety constraints considered in this study.

Further studies considering different types of safety constraints and related CBFs
for multicopter as well as different virtual controllers for safe backstepping are required.
Future works include the adaptive safe controller for fault-tolerant control.

\bibliographystyle{IEEEtran}
\bibliography{references}

%


%

%

\end{document}